\renewcommand{\succ}{\ensuremath\lhd}
\colorlet{good}{blue!75}
\colorlet{bad}{red!75}
\newcommand{\Pref}{\ensuremath{\mathtt{Pref}}}
\newcommand{\Suff}{\ensuremath{\mathtt{Suff}}}
\newcommand{\Substr}{\ensuremath{\mathtt{Substr}}}
\newcommand{\Subfact}{\ensuremath{\mathtt{Subfact}}}
\newcommand{\Supfact}{\ensuremath{\mathtt{Supfact}}}
\newcommand{\NSupFacs}{\ensuremath{\mathtt{NextSupFact}}}
\newcommand{\tuple}[1]{\ensuremath{\langle #1 \rangle}}
\newcommand{\defeq}{\overset{def}{=} }
\newtheorem{theorem}{Theorem}
\newtheorem{definition}{Definition}
\newtheorem{remark}{Remark}
\newtheorem{lemma}{Proposition}
\title{Learning with Partially Ordered Representations}
\author{Jane Chandlee \\
	Tri-Co Department of Linguistics \\
	Haverford College\\
	\texttt{jchandlee@haverford.edu} \\\And
	R{\'e}mi Eyraud \\
	QARMA team, LIS \\
	Aix-Marseille University \\
	\texttt{remi.eyraud@lis-lab.fr} \\\AND
	Jeffrey Heinz \\
	Department of Linguistics \\
	Institute for Advanced Computational Science\\
	Stony Brook University\\
	\texttt{jeffrey.heinz@stonybrook.edu}\\\And
	Adam Jardine \\
	Department of Linguistics \\
	Rutgers University \\
	\texttt{adam.jardine@rutgers.edu} \\\AND
	Jonathan Rawski \\
	Department of Linguistics \\
	Institute for Advanced Computational Science\\
	Stony Brook University\\
	\texttt{jonathan.rawski@stonybrook.edu}}
\date{}
\begin{document}
	\maketitle
	\begin{abstract}
		This paper examines the characterization and learning of grammars defined with enriched representational models. 
		Model-theoretic approaches to formal language theory traditionally assume that each position in a string belongs to exactly one unary relation.
		We consider unconventional string models where positions can have multiple, shared properties, which are arguably useful in many applications.
		We show the structures given by these models are partially ordered, and present a learning algorithm that exploits this ordering relation to effectively prune the hypothesis space. 
		We prove this learning algorithm, which takes positive examples as input, finds the most general grammar which covers the data. 
	\end{abstract}

\section{Introduction} 

Foundational connections between formal languages, finite-state automata, and logic have been known for decades \citep{Buchi1960,Thomas1997}. Logical approaches are advantageous since they flexibly admit different representations. In many domains, such as biological sequencing or linguistics, shared properties of symbols in sequences provide information currently ignored by string-based inference algorithms, which largely focus on learning automata \citep{Higuera2010}. Here we explore the idea that domain-specific knowledge can be encoded representationally via model theory \citep{Libkin2004}, and shows how these representations can facilitate pattern learning. 

This paper synthesizes results in grammatical inference and model theory to present a novel algorithm which learns classes of formal languages using enriched representations of strings. In fact, our model-theoretic approach immediately generalizes these results to arbitrary data structures. 
Here we are concerned with the learning of those formal languages which can be defined via a set of structural constraints, such as the Strictly $k$-Local and Strictly $k$-Piecewise languages \citep{RogersPullum2011,Rogers-HeinzEtAl-2010-LPTSS}. Models of strings in the languages must not contain these forbidden structures \citep{Rogers-HeinzEtAl-2013-CSC}. 
Specifically, we define a learner whose hypothesis space is structured as a partial order by the relational signature of the particular model theory. 
We show how to traverse this space bottom-up from positive data to find a grammar which covers the data with the most general constraints.

\begin{figure*}[htbp!]
	\centering
	\begin{tikzpicture}[shorten >=1pt,->,thick, scale=0.8]
	\tikzstyle{vertex}=[circle,fill=black!25,minimum size=17pt,inner sep=0pt]
	\draw node[vertex] [label={[align=center]above:a}] (1) at (0,0) {1};
	\draw node[vertex][label={[align=center]above:b}] (2) at (2,0) {2};
	\draw node[vertex][label={[align=center]above:b}] (3) at (4,0) {3};
	\draw node[vertex][label={[align=center]above:a}] (4) at (6,0) {4};
	\draw (1) -- (2); \node at (1,0.25) {$\triangleleft$};
	\draw (2) -- (3); \node at (3,0.25) {$\triangleleft$};
	\draw (3) -- (4); \node at (5,0.25) {$\triangleleft$};
	
	\draw node[vertex][label={[align=center]above:a}] (s) at (8,0) {1};
	\draw node[vertex][label={[align=center]above:b}] (r) at (10,0) {2};
	\draw node[vertex][label={[align=center]above:b}] (i) at (12,0) {3};
	\draw node[vertex][label={[align=center]above:a}] (S) at (14,0) {4};
	\draw (s) -- (r); \node at (9,0.25) {$<$};
	\draw (r) -- (i); \node at (11,0.25) {$<$};
	\draw (i) -- (S); \node at (13,0.25) {$<$};
	\draw (s) .. controls +(+30:2cm) and +(+150:2cm) .. (i);
	\draw (s) .. controls +(+40:2.5cm) and +(+140:2.5cm) .. (S);
	\draw (r) .. controls +(-30:2cm) and +(-150:1cm) .. (S);
	\node at (10.5,.9)  {$<$};
	\node at (11,1.5)   {$<$};
	\node at (12,-0.75) {$<$};
	\end{tikzpicture}
	\caption{Visualizations of the successor (left) and precedence (right) models of $abba$.}
	\label{fig:cm}
\end{figure*}
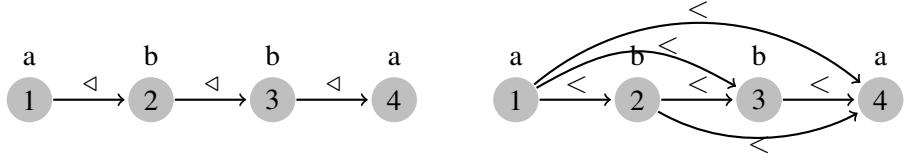

The paper is structured as follows:
Section 2 provides mathematical preliminaries in model theory. 
Section 3 characterizes ordering relations over these structures.
Section~4 generalizes the grammars employed in string extension and lattice-based learning \citep{Heinz-2010-SEL,Heinz-KasprzikEtAl-2012-LLHS} to show how these model theoretic structures can define classes of formal languages.
Section 5 discusses some entailments our learning algorithm takes advantage of. 
Section 6 defines a learning problem and criteria for selecting adequate solutions. 
Section 7 presents a general-to-specific, bottom-up algorithm which provably satisfies the learning criteria. 
Section 8 concludes the paper.  

\section{Preliminaries}
\subsection{Elements of Language Theory}
\label{subsec:elem}

The set of all possible finite strings of symbols from a finite
alphabet $\Sigma$ and the set of strings of length $\leq n$ are
$\Sigma^*$ and $\Sigma^{\leq n}$, respectively. The unique empty
string is represented with $\lambda$. The length of a string $w$ is
$|w|$, so $|\lambda| =$ 0. 
If $u$ and $v$ are two strings then we denote their concatenation
with $uv$. If $w$ is a string and $\sigma$ is the $i$th symbol in $w$ then $w_i=\sigma$, so $abcd_2=b$.


The set of prefixes of $w$, $\Pref(w)$, is
$\{p \in \Sigma^*\mid(\exists s \in \Sigma^*)[w = ps]\}$, the set
of suffixes of $w$, $\Suff(w)$, is $\{s \in \Sigma^*\mid(\exists p \in
\Sigma^*)[w = ps]\}$, the set of substrings, $\Substr(w)$, is $\{u\in \Sigma^*\mid(\exists l, r \in \Sigma^*)[w=lur] \}$, and the set of subsequences, $\mathtt{Subseq}(w) = 
{u_1u_2\cdots u_n|
	\exists v_0\cdot v_1\cdots v_n\in\Sigma^*[w=v_0u_1v_1\cdots u_nv_n]}
$

\subsection{Elements of Finite Model Theory}

Model theory, combined with logic, provides a powerful way to study and understand mathematical objects with structures~\citep{Enderton2001}.  
In this paper we only consider finite relational models~\citep{Libkin2004} of strings in $\Sigma^*$.
\begin{definition}[Models]
	A \emph{model signature} is a tuple $S=\tuple{D;R_1,R_2,\ldots,R_m}$ where the domain $D$ is a finite set, and each $R_i$ is a $n_i$-ary relation over the domain. 
	A \emph{model for a set of objects} $\Omega$ is a total, one-to-one function from $\Omega$ to structures whose type is given by a model signature.
\end{definition}
For example, a conventional model for strings in $\Sigma^*$ is given by the signature $\Gamma^\succ\defeq\tuple{D;\succ, [R_\sigma]_{\sigma\in\Sigma} }$ and the function $M^\succ: \Sigma^*\to \Gamma^\succ$ such that $M^\succ(w) \defeq \tuple{D^w;\succ, [R^w_\sigma]_{\sigma\in\Sigma} }$
where $D^w\defeq \{1,\ldots, |w|\}$ is the domain, $\succ\defeq\{(i,i+1)\in D\times D \}$ is the successor relation which orders the elements of the domain, and $[R^w_\sigma]_{\sigma\in\Sigma}$ is a set of $|\Sigma|$ unary relations such that for each $\sigma\in\Sigma$, $R^w_\sigma\defeq\{ i\in D^w\mid w_i=\sigma\}$.
We will usually omit the superscript $w$ since it will be clear from the context.

For example, with $\Sigma=\{a,b,c\}$ and the model above for strings, we have 
$M^\succ(abba)= \big\langle D=\{1, 2, 3, 4\}; \succ=\{(1, 2), (2,3), (3,4)\}, R_a=\{1,4\}, R_b=\{2,3\}, R_c=\emptyset\big\rangle~.$

Figure~\ref{fig:cm} illustrates $M^\succ(abba)$ on the left.




Another conventional model is the precedence model, with the signature $\Gamma^< \defeq\tuple{D;<, [R_\sigma]_{\sigma\in\Sigma} }$. 
It differs from the successor model only in that the order relation is defined with general precedence $< \defeq\{(i,j)\in D\times D\mid i < j\}$  \citep{Buchi1960,McNaughtonPapert1971,Rogers-HeinzEtAl-2013-CSC}. 
Under this signature, the string $abba$ has the following model.

$M^<(abba) =  \big\langle D=\{1,2,3,4\}; <=\{(1,2),(1,3),(1,4), (2,3), (2,4), (3,4)\}, R_a=\{1,4\}, R_b=\{2,3\}, R_c=\emptyset\big\rangle$.

Figure~\ref{fig:cm} illustrates $M^{<}(abba)$ on the right.


The model-theoretic framework is not unique to strings. It extends to arbitrary data structures by expanding parts of the model signature. For example, \citet{rogers2003syntactic} describes a model-theoretic characterization of trees of arbitrary dimensionality where the domain $D$ is a Gorn tree domain \cite{gorn1967explicit}.  This is a hereditarily prefix closed set \textit{D} of node addresses, that is to say, for every $d \in D$ with $d = \alpha i$, where $i \in \mathbb{N}$, $\alpha \in \mathbb{N}^*$  it holds that $\alpha \in D$, and for every $d \in D$ with $d = \alpha i \neq \alpha0,$ then $\alpha(i-1) \in D$. 

In this view, a string may be called a one-dimensional or unary-branching tree, since it has one axis along which its nodes are ordered. In a standard tree mdoel signature, the set of nodes is ordered by two binary relations, ``dominance" and ``immediate left-of". Suppose $s$ is the mother of two nodes $t$ and $u$ in some standard tree, and also assume that $t$ precedes $u$. Then we might say that $s$ dominates the string $tu$. Standard or two-dimensional trees, then, relate nodes to one-dimensional trees (strings) by immediate dominance. A three-dimensional tree relates nodes to two-dimensional, i.e. standard trees, corresponding to Tree-Adjoining Grammar derivations. In general, a $d$-dimensional tree is a set of nodes ordered by $d$ dominance relations such that the $n$-th dominance relation relates nodes to $(n-1)$-dimensional trees (for $d = 1$, single nodes are zero-dimensional trees).


While a Gorn tree domain as written encodes these dominance and precedence relations implicitly, we may explicitly write them out model-theoretically so that a signature for a $\Sigma$-labeled 2-$d$ tree is $\Gamma^{\succ\prec}=\tuple{D;\succ,\prec, [R_\sigma]_{\sigma\in\Sigma} }$ where $\succ$ is the ``immediate dominance" relation and $\prec$ is the ``immediate left-of" relation. Model signatures that include transitive closure relations of each of these have also been studied.

\vspace{-.5cm}
\begin{figure}[hbp!]
	\centering
	\begin{tikzpicture}
	[level 1/.style={sibling distance=24mm,level distance=12mm},
	level 2/.style={sibling distance=12mm}]	
	\node {$\epsilon$}
	child {node  (D){0} 
		child {node (E){00}}
		child {node (F){01}
			child {node (G){010}}
			child {node (H){011}}
		}
	}
	child {node (I){1}
		child {node (J){10}}
		child {node (M){11}
			child {node (N){110}}
			child {node (O){111}
				child {node (L){1110}}
			}
			child {node (P){112}}
		}
	};
	
	\draw[dotted] (D) -- (I);
	\draw[dotted] (E) -- (F);
	\draw[dotted] (J) -- (M);
	\draw[dotted] (G) -- (H);
	\draw[dotted] (N) -- (O);
	\draw[dotted] (O) -- (P);
	
	\end{tikzpicture}

	\caption{2-dimensional tree model. Dominance and precedence relations shown with solid/dashed and dotted lines, respectively}
	\label{fig:3dtree}
\end{figure}
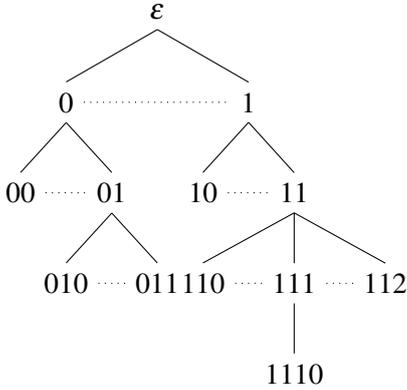


\subsection{Unconventional Word Models}\label{ss:uwm}

Whereas \citet{rogers2003syntactic} generalized conventional word models to trees, here we generalize word models in a different way. Conventional string models are the successor and precedence models introduced previously. 
What makes these models conventional is the unary relations which essentially label each domain element with a single, mutually exclusive, property: the property of being some $\sigma\in\Sigma$.

In contrast, unconventional models for strings recognize that distinct alphabetic symbols may share properties, and expands the model signature by including these properties as unary relations \citep{Strother-Garcia-HeinzEtAl-2016-UMTGICSP,VuEtAl2018}. 
For example, a conventional model of $\Sigma=\{\mathtt{a,\ldots,z,A,\ldots,Z}\}$ would include 52 unary relations, one for each lowercase and capital letter. 
On the other hand, an unconventional model might only include 27: 26 for the letters, and one unary relation  \texttt{Capital}. 
Then, letters \texttt{A} and \texttt{a} share the `A' property and \texttt{A} additionally has the property of being \texttt{Capital}. 

In linguistics, speech sounds are commonly decomposed into binary features based on their phonetic properties. 
So the set of segments $\{$\textipa{z,Z,d,b,g,\ldots}$\}$ all share the property \texttt{+Voice}, meaning the vocal cords are activated, 
while the segments $\{$\textipa{s,S,t,p,k,\ldots}$\}$ share the 
property \texttt{-Voice}, meaning the vocal cords are not activated. 
Thus unconventional models may refer to individual features in defining grammatical constraints, rather than each individual segment.
  
Different representations of strings and trees provide a unified perspective on well-known subclasses of the regular languages from a model-theoretic and logical perspective \citep{Thomas1997,Rogers-HeinzEtAl-2013-CSC}. 
However, they also open up new doors for grammatical inference by allowing one to consider other models for strings \citep{Strother-Garcia-HeinzEtAl-2016-UMTGICSP,VuEtAl2018}.

\section{Subfactors, Superfactors, Ideals and Filters}

We sometimes refer to the model of a string $w$ as a \emph{structure}. 
However, structures are more general in that they correspond to any mathematical structure conforming to the model signature. 
As such, while a model of a string $w$ will always be a structure, a structure will not always be a model of a string $w$. 
The \emph{size} of a structure $S$, denoted $|S|$, coincides with the cardinality of its domain. 

We next wish to introduce a partial ordering over structures. To do so, we must define the terms \emph{connected, restriction}, and \emph{factor}. 
For each structure
$S=\tuple{D;\succ, R_1, \ldots R_n}$ let the binary ``connectedness'' relation $C$ be
defined as follows.

$  C  \defeq  \big\{(x,y)\in D\times D \mid
\exists i\in\{1\ldots n\}, 
\exists(x_1\ldots x_m)\in R_i, 
\exists s,t\in\{1\ldots m\}, 
x=x_s, y=x_t\big\}$

Informally, domain elements $x$ and $y$ belong to $C$ provided they belong to some non-unary relation. Let $C^*$ denote the symmetric transitive closure of $C$.

\begin{definition}[Connected structure]
	A structure $S=\tuple{D;\succ, R_1,R_2,\ldots,R_n}$ is \emph{connected} iff for all $x,y\in D$, $(x,y)\in C^*$.
\end{definition}
For example, $M^\succ(abba)$ above is a connected structure. However, the structure $S_{ab,~ba}$ shown below which is identical to $M^\succ(abba)$  except it omits the pair (2,3) from the order relation is not connected since none of (1,3),(1,4), (2,3) nor (2,4) belong to $C^*$.  
$S_{ab,~ba}= \big\langle D=\{1,2,3,4\}; \succ=\{(1,2),(3,4)\}, R_a=\{1,4\}, R_b=\{2,3\}, R_c=\emptyset\big\rangle$

\begin{center}
	\begin{tikzpicture}[shorten >=1pt,->,thick]
	\tikzstyle{vertex}=[circle,fill=black!25,minimum size=17pt,inner sep=0pt]
	\draw node[vertex] [label={[align=center]above:a}] (1) at (0,0) {1};
	\draw node[vertex][label={[align=center]above:b}] (2) at (2,0) {2};
	\draw node[vertex][label={[align=center]above:b}] (3) at (4,0) {3};
	\draw node[vertex][label={[align=center]above:a}] (4) at (6,0) {4};
	\draw (1) -- (2); \node at (1,0.25) {$\triangleleft$};
	\draw (3) -- (4); \node at (5,0.25) {$\triangleleft$};
	\end{tikzpicture}
\end{center}

Note that no string in $\Sigma^*$ has structure $S_{ab,~ba}$ as its model.

\begin{definition}
	\label{def:restriction}
	$A=\tuple{D^A; \succ, R_1^A, \ldots, R_n^A}$ \emph{is a restriction of} $B=\tuple{D^B;\succ, R_1^B, \ldots, R_n^B}$ iff $D^A\subseteq D^B$ and for each $m$-ary relation $R_i$, we have
	$R_i^A=\{(x_1\ldots x_m)\in R_i^B\mid x_1, \ldots, x_m\in D^A\}$.
\end{definition}   
Informally, one identifies a subset $A$ of the domain of $B$ and strips $B$ of all elements and relations which are not wholly within $A$. 
What is left is a restriction of $B$ to $A$. 
\begin{definition}
	Structure $A$ is a \emph{subfactor} of structure $B$ ($A\sqsubseteq B$) if $A$ is connected,  there exists a restriction of $B$ denoted $B'$, and there exists $h:A\rightarrow B'$ such that for all $a_1,\ldots a_m\in A$ and for all $R_i$ in the model signature: if $h(a_1),\ldots h(a_m)\in B'$ and $R_i(a_1,\ldots a_m)$ holds in $A$ then $R_i(h(a_1),\ldots h(a_m))$ holds in $B'$. If $A\sqsubseteq B$ we also say that $B$ is a \emph{superfactor} of $A$.
\end{definition}
In other words, properties that hold of the connected structure $A$ also hold in a related way within $B$. 

If $A\sqsubseteq B$ and $|A|=k$ then we say $A$ is a $k$-subfactor of $B$. 
For all $w\in\Sigma^*$, and for any model $M$ of $\Sigma^*$, let the subfactors of $w$ be $\Subfact(M,w) = \{A\mid A\sqsubseteq M(w)\}$ and the $k$-subfactors of $w$ be $\Subfact_k(M,w) = \{A\mid A\sqsubseteq M(w),~ |A|\leq k\}$. 
We also define $\Subfact(M,\Sigma^*)$ to be $\bigcup_{w\in\Sigma^*} \Subfact(M,w)$ and $\Subfact_k(M,\Sigma^*)$ to be $\bigcup_{w\in\Sigma^*} \Subfact_k(M,w)$. 
When $M$ is understood from context, we write $\Subfact(w)$ instead of $\Subfact(M,w)$. 
We define the sets of superfactors   $\Supfact(M,w)$ and $\Supfact(M,\Sigma^*)$ similarly. 

Observe that $(\Subfact(M,w), \sqsubseteq)$ is a partially ordered set (poset). The next definition and lemma establishes that models of strings are principal elements of ideals and filters.

\begin{definition}[Ideals]
	A subset $I$ of a poset is an Ideal if 
	\begin{itemize}
		\item $I$ is non-empty
		\item for every $x$ in $I$, $y \leq x$ implies that $y$ is in $I$
		\item for every $x,y$ in $I$, there exists some element $z$ in $I$, such that $x \leq z$ and $y \leq z$.
	\end{itemize}
\end{definition}
The dual of an ideal is a filter.
\begin{definition}[Filters]
	A subset $F$ of a poset is a \emph{filter} iff 
	\begin{itemize}
		\item $F$ is non-empty
		\item for every $x$ in $F$, $x \leq y$ implies that $y$ is in $F$
		\item for every $x,y$ in $F$, there exist some element $z$ in $F$, such that $z \leq x$ and $z \leq y$.
	\end{itemize}
\end{definition}

\begin{definition}[Principal Ideals, Filters and Elements]
	For any poset  $\langle X, \leq \rangle$, the smallest filter containing $x\in X$ is a \emph{principal filter} and $x$ is the \emph{principal element} of this filter. 
	Similarly, the smallest ideal containing $x\in X$ is a \emph{principal ideal} and $x$ is the \emph{principal element} of this ideal.
\end{definition}

\begin{remark}
	Given a model $M$ of $\Sigma^*$ and $k>0$, $\Subfact_k(M,w)$ is a principal ideal in $\Subfact(M,\Sigma^*)$ whose principal element is $M(w)$.
	$\Supfact_k(M,w)$ is a principal filter in $\Supfact(M,\Sigma^*)$ whose principal element is $M(w)$.
	The empty structure $\tuple{\emptyset; \emptyset, \ldots \emptyset}$ is a subfactor of every structure in $\Subfact(M,\Sigma^*)$. 
\end{remark}


The next two propositions show how this representational perspective unifies the treatment of substrings and subsequences. They are subfactors under the successor and precedence models, respectively. 
A string $x=x_1\cdots x_n$ is a substring of $y$ iff there exists $l, r$ such that $y=lxr$. String $x$ is a subsequence of $y$ iff there exists $v_0, v_1, \ldots v_n$ such that $w=v_0x_1v_1\cdots x_nv_n$.
\begin{lemma}[Substrings are subfactors under $M^\succ$]
	\label{lemma:sub-fact}
	For all strings $x,y\in\Sigma^*$, $x$ is a substring of $y$ iff $M^\succ(x)\sqsubseteq M^\succ(y)$.
\end{lemma}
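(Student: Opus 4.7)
The plan is to prove each direction of the biconditional separately by exploiting the fact that the successor relation $\succ$ in $M^\succ(y)$ is functional: each position has at most one $\succ$-successor, so the image of a $\succ$-chain under a relation-preserving map is forced to be a consecutive block of positions.

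For the ``$\Rightarrow$'' direction, I would write $y = lxr$ and set $p = |l| + 1$, then explicitly define $h(i) = p + i - 1$ on the domain of $M^\succ(x)$, taking $B'$ to be the restriction of $M^\succ(y)$ to the image $\{p, p+1, \ldots, p + |x| - 1\}$. Because this image is a consecutive block of positions in $y$, the $\succ$-edges of $M^\succ(x)$ all survive in $B'$, and because $y_{p+i-1} = x_i$ holds by the definition of substring, the unary label relations $R_\sigma$ are preserved as well. The structure $M^\succ(x)$ is connected since it is a single $\succ$-chain, so by the definition of subfactor $M^\succ(x) \sqsubseteq M^\succ(y)$.

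For the ``$\Leftarrow$'' direction, suppose some restriction $B'$ of $M^\succ(y)$ and some map $h$ witness $M^\succ(x) \sqsubseteq M^\succ(y)$. I would set $p = h(1)$ and prove by induction that $h(i) = p + i - 1$: since $(i, i+1)$ is a $\succ$-edge in $M^\succ(x)$, relation preservation forces $(h(i), h(i+1)) \in \succ^{B'} \subseteq \succ^{M^\succ(y)}$, and functionality of $\succ$ in $M^\succ(y)$ leaves no choice but $h(i+1) = h(i) + 1$. Preservation of the unary relations $R_{x_i}$ then gives $y_{p+i-1} = x_i$ for every $i$, so $x$ appears as the length-$|x|$ substring of $y$ starting at position $p$.

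The empty case $x = \lambda$ is handled separately and trivially: the empty structure is vacuously connected and a subfactor of every structure, and $\lambda$ is a substring of every string. The only subtle point is that the definition of $\sqsubseteq$ does not a priori force $h$ to be injective or order-preserving; the main work is the observation that the functional character of $\succ$ in a conventional word model does this for us automatically. This is also the step where the argument depends essentially on using the successor model rather than the precedence model, foreshadowing the next proposition on subsequences.
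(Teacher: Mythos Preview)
Your proposal is correct and follows essentially the same route as the paper's proof: both directions use the shift map $h(i)=|l|+i$ (your $p+i-1$ with $p=|l|+1$) and, for the converse, the observation that preservation of $\succ$ forces $h(i+1)=h(i)+1$ so that the image of the chain is a contiguous block. If anything you are slightly more careful than the paper, which tacitly calls the witnessing map an ``isomorphism''; your explicit remark that injectivity and order-preservation are not assumed but are forced by the functionality of $\succ$ is exactly the point that makes the argument go through, and it is also what distinguishes this case from the precedence model treated in the next proposition.
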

\begin{proof}
	Note that the result trivially holds for $x=\lambda$: we restrict ourselves to the case $x\neq \lambda$.
	Let $M^\succ(x)=\tuple{D^x; \succ, [R^x_\sigma]}$ and $M^\succ(y)=\tuple{D^y; \succ, [R^y_\sigma]}$
	
	($\Rightarrow$). Suppose $x$ is a substring of $y$: it exists $l, r$ such that $y = lxr = \sigma_1 \ldots \sigma_{|l|}\sigma_{|l|+1} \ldots \sigma_{|l|+|x|}\sigma_{|l|+|x|+1} \ldots \sigma_{|l|+|x|+|r|}$.
	This implies that, for all $i$, $1\leq i\leq |x|$, $d \in R^y_{\sigma_{|l|+i}}$ iff $d \in R^x_{\sigma_{i}}$. Thus, if we set the isomorphism $\phi$ to be such that $\phi(i)=|l|+i$ for $1\leq i \leq |x|$, we have $\phi(M^\succ(x))$ that is a restriction of $M^\succ(y)$, and therefore $M^\succ(x)\sqsubseteq M^\succ(y)$ by definition.
	
	($\Leftarrow$). Let $y$ be the sequence of letters $\sigma_1\ldots \sigma_{|y|} $ and suppose $M^\succ(x)\sqsubseteq M^\succ(y)$: there exists a isomorphism $\phi: \{1, \ldots, |x|\} \to \{1, \ldots, |y| \}$ such that $\phi(M^\succ(x))$ is a restriction of $M^\succ(y)$. This means that $\phi(D^x)\subseteq D^y$ and for all $\sigma$: $\phi(R^x_\sigma)=\{\phi(i) \in R^y_\sigma\mid \phi(i) \in \phi(D^x)\}$ (Definition~\ref{def:restriction}). 
	This implies that $x=\sigma_{\phi(1)}\ldots\sigma_{\phi(|x|)}$.
	Given that $\succ = \{(i, i+1)\in D\times D\}$,
	we have $\phi(i+1)=\phi(i)+1$ and thus there exist $l$ and $r$ in $\Sigma^*$ such that $y = l\sigma_{\phi(1)}\ldots\sigma_{\phi(|x|)} r = lxr$.
\end{proof}

\begin{lemma}[Subsequences are subfactors under $M^<$]
	For all strings $x,y\in\Sigma^*$, $x$ is a subsequence of $y$ iff $M^<(x)\sqsubseteq M^<(y)$.
\end{lemma}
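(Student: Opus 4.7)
The plan is to mirror the proof of Lemma~\ref{lemma:sub-fact}, with the following structural adjustment: since $<$ is a strict total order (rather than successor), the embedding $\phi: D^x \to D^y$ witnessing $M^<(x) \sqsubseteq M^<(y)$ need only satisfy monotonicity ($i<j$ implies $\phi(i)<\phi(j)$), not the consecutive-position condition $\phi(i{+}1)=\phi(i)+1$. This is exactly the slack needed to move from substrings to subsequences. Note also that connectedness of $M^<(x)$ is automatic for $x\neq\lambda$ since $<$ is a total order on the domain, and the $x=\lambda$ case is trivial.

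For the $(\Rightarrow)$ direction, assume $x = x_1\cdots x_n$ is a subsequence of $y$, so $y = v_0 x_1 v_1 \cdots x_n v_n$ for some $v_0,\ldots,v_n\in\Sigma^*$. I would define $\phi(i) = |v_0 x_1 v_1 \cdots x_{i-1} v_{i-1}| + 1$, the position in $y$ where the $i$-th selected symbol of $x$ sits. Two checks are needed: (i) label preservation, i.e.\ $i\in R^x_\sigma$ iff $\phi(i)\in R^y_\sigma$, which follows from $y_{\phi(i)} = x_i$; and (ii) order preservation, $i<j$ in $D^x$ implies $\phi(i)<\phi(j)$ in $D^y$, which is immediate since the cumulative length expression defining $\phi$ is strictly increasing in $i$. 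Hence $\phi(M^<(x))$ is a restriction of $M^<(y)$, establishing $M^<(x) \sqsubseteq M^<(y)$.

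For the $(\Leftarrow)$ direction, suppose $M^<(x)\sqsubseteq M^<(y)$ via $\phi:D^x\to D^y$ such that $\phi(M^<(x))$ is a restriction of $M^<(y)$. Because $<$ on $D^x$ is the strict total order on $\{1,\ldots,|x|\}$ and $\phi$ preserves $<$, we get $\phi(1)<\phi(2)<\cdots<\phi(|x|)$. By label preservation, $y_{\phi(i)}=x_i$ for each $i$. Writing $y=\sigma_1\cdots\sigma_{|y|}$, set
\[
v_0 = \sigma_1\cdots\sigma_{\phi(1)-1}, \quad v_i = \sigma_{\phi(i)+1}\cdots\sigma_{\phi(i+1)-1}\ (1\leq i<n), \quad v_n = \sigma_{\phi(n)+1}\cdots\sigma_{|y|},
\]
(with the convention that any $v_i$ is $\lambda$ when the index range is empty). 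Then $y = v_0 x_1 v_1 \cdots x_n v_n$, so $x$ is a subsequence of $y$.

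The main (minor) obstacle is conceptual: one must recognize that the relaxation from successor to precedence weakens the condition on $\phi$ from ``consecutive'' to merely ``order-preserving,'' which is precisely the combinatorial content distinguishing subsequences from substrings. Once this observation is in place, both directions reduce to routine index bookkeeping, entirely parallel to Lemma~\ref{lemma:sub-fact}.
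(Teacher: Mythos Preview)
Your proposal is correct and is exactly what the paper intends: the paper's own ``proof'' simply reads ``We leave this proof to the Reader since it is of similar nature to the previous one,'' and you have carried out precisely that analogy to Lemma~\ref{lemma:sub-fact}, correctly identifying that the passage from successor to precedence relaxes the condition on $\phi$ from $\phi(i{+}1)=\phi(i)+1$ to mere monotonicity. Nothing further is needed.
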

\begin{proof}
	We leave this proof to the Reader since it is of similar nature to the previous one.
\end{proof}

\section{Grammars, Languages, and Language Classes}
\label{sec:grammars}
Factors can define grammars, formal languages, and classes of formal languages. 
Usually a model signature provides the vocabulary for some logical language. 
Sentences in this logical language define sets of strings as follows.
The language of a sentence $\phi$ is all and only those strings whose models satisfy $\phi$. 
Within the regular languages, many well-known subregular classes can be characterized logically in this way \citep{McNaughtonPapert1971,RogersPullum2011,Rogers-HeinzEtAl-2013-CSC,Thomas1997}. 

Intuitively, the grammars we are interested in consist of a finite list of \emph{forbidden} subfactors, whose largest size is bounded by $k$. 
Strings in the language of this grammar are those which do not contain any forbidden subfactors. 
In this way these grammars are like logical expressions which are "conjunctions of negative literals" \citep{Rogers-HeinzEtAl-2013-CSC} where the negative literals are played by the the forbidden factors.

Each forbidden subfactor is a principal element of a filter and the language is all strings whose models are not in any of these filters.
For each $k$, there is a class of languages including all and only those languages that can be defined in this way. 
For example, the Strictly $k$-Local (SL$_k$) and Strictly $k$-Piecewise languages can be defined in this way; they are languages which forbid finitely many substrings or subsequences, respectively \citep{GarciaEtAl1990,Rogers-HeinzEtAl-2010-LPTSS}. Formally:

\begin{definition}
	\label{def:language}
	Let $k$ be some positive integer, and $M$ a model of $\Sigma^*$ with signature $\Gamma$. A \emph{grammar} $G$ is a subset of $\Subfact_k(M,\Sigma^*)$. 
	The \emph{language} of $G$ is $L(G) = \{w\in\Sigma^*\mid \Subfact_k(M,w)\cap G =\emptyset\}$.
	The \emph{class of languages} $\mathcal{L}(M,k) =\{ L\mid \exists G \subseteq\Subfact_k(M,\Sigma^*), L(G) = L \}$.
\end{definition}
The elements of $G$ are principal elements of filters, and are called forbidden subfactors.

As an example, let $\Sigma=\{a,b,c\}$ and consider $G=\{M^\succ(aa),M^\succ(bb),M^\succ(c)\}$. 
$L(G)$ includes the strings $(ab)^+$ and  $(ba)^+$ and no other strings, because the substrings $aa$, $bb$, and $c$ are all forbidden. 
This language belongs to $\mathcal{L}(M^\succ,2)$.

\begin{lemma}\label{zeroInt}
	For each $w\in L(G)$ and each $g\in G$,                                                                                                                        $\Subfact(M,w)$ has a zero intersection with $\Supfact(g)$.
\end{lemma}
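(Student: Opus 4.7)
The plan is to proceed by contradiction, using transitivity of the subfactor relation $\sqsubseteq$. Suppose there exists $A\in\Subfact(M,w)\cap\Supfact(g)$, so $A\sqsubseteq M(w)$ and $g\sqsubseteq A$. The target is to chain these into $g\sqsubseteq M(w)$ and then extract the promised contradiction with $w\in L(G)$.

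To establish transitivity concretely, let $h_1:g\to A'$ witness $g\sqsubseteq A$ (with $A'$ a restriction of $A$), and let $h_2:A\to (M(w))'$ witness $A\sqsubseteq M(w)$ (with $(M(w))'$ a restriction of $M(w)$). The composition $h_2\circ h_1:g\to (M(w))'$ is the proposed witness for $g\sqsubseteq M(w)$: $g$ is connected because $g\in G\subseteq\Subfact_k(M,\Sigma^*)$, and whenever $R_i(g_1,\ldots,g_m)$ holds in $g$, the relation-preservation clause for $h_1$ yields $R_i(h_1(g_1),\ldots,h_1(g_m))$ in $A'$, which by Definition~\ref{def:restriction} also holds in $A$, and a second application for $h_2$ then delivers $R_i(h_2(h_1(g_1)),\ldots,h_2(h_1(g_m)))$ in $(M(w))'$.

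With $g\sqsubseteq M(w)$ in hand the contradiction is immediate: since $g\in G\subseteq\Subfact_k(M,\Sigma^*)$ we have $|g|\leq k$, so $g\in\Subfact_k(M,w)$ and hence $g\in\Subfact_k(M,w)\cap G$, contradicting $w\in L(G)$ by Definition~\ref{def:language}. I expect no serious obstacle: the only non-bookkeeping step is transitivity of $\sqsubseteq$, which is already implicit in the earlier remark that $(\Subfact(M,\Sigma^*),\sqsubseteq)$ forms a poset, so the substance of the proof is just making the composition of witnesses explicit.
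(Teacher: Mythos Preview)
Your argument is correct and follows the same route as the paper's own proof: assume an $A$ with $g\sqsubseteq A\sqsubseteq M(w)$, use transitivity of $\sqsubseteq$ to get $g\sqsubseteq M(w)$, and contradict $w\in L(G)$ via Definition~\ref{def:language}. The only difference is that you spell out the transitivity step and the bound $|g|\leq k$ explicitly, whereas the paper leaves both implicit.
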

\begin{proof}
	Suppose there exists $A\in \Subfact_k(\Sigma^*)$ such that $A\sqsubseteq M(w)$ and $g\sqsubseteq A$. This implies that $g\sqsubseteq M(w)$ and thus that $\Subfact_k(M,w)\cap G\neq \emptyset$ which contradicts Definition~\ref{def:language}.
\end{proof}
In other words, the principal ideal of $M(w)$ is disjoint from the principal filters of the elements of $G$.

\section{Grammatical Entailments}

\begin{figure*}[htbp!]
	\centering
	
	\begin{tikzpicture}[scale=0.9]
	
	\node (s) at (0,0) {[]}; 
	\node (sc) at (-7em,4em) {[\texttt{capital}]};
	\node (sa) at (0,4em){[\texttt{a}]};
	\node (sb) at (6em,4em) {[\texttt{b}]};
	\node (sz) at (9em,4em) {\ldots};
	\node (sca) at (-4em,8em) {[\texttt{capital, a}]};
	\node (sbd) at (4em,8em) {[\texttt{a}][]};
	\node (scad) at (-10em,12em) {[\texttt{capital, a}][]};
	\node (sbad) at (1em,12em) {[\texttt{a}][\texttt{capital, a}]};
	\node (sdad) at (9em,12em) {[\texttt{a}][\texttt{a}]};
	\node (scada) at (-10em,16em) {\ldots};
	\node (sbadb) at (1em,16em) {\ldots};
	\node (sdadc) at (9em,16em) {\ldots};
	
	\foreach \source/\target in {s/sa,s/sb,s/sc,s/sz,sc/sca,sa/sca,sa/sbd,sca/scad,sca/sbad,sbd/scad,sbd/sbad,sbd/sdad,scad/scada,sbad/sbadb,sdad/sdadc}
	\draw (\source) to (\target);
	
	\node at (sa.north west) {\color{good}\checkmark};
	\node at (sc.north west) {\color{good}\checkmark};
	\node at (sca.north west) {\color{good}\checkmark};
	\node (check1) at (sa.north west) {\color{good}\checkmark};
	\node (check2) at (s.north west) {\color{good}\checkmark};
	\node (bad2) at (sbd.north west) {\color{bad}X};
	\node (bad2) at (sbad.north west) {\color{bad}X};
	\node (bad2) at (scad.north west) {\color{bad}X};
	\node (bad2) at (sdad.north west) {\color{bad}X};
	\node (bad2) at (scada.north west) {\color{bad}X};
	\node (bad2) at (sbadb.north west) {\color{bad}X};
	\node (bad2) at (sdadc.north west) {\color{bad}X};
	
	\draw [good,->,thick] ($(sca.south)+(-7em,.5em)$) -- +(0em,-8em);
	\draw [bad,->,thick] ($(sbd.north)+(7em,-.5em)$) -- +(0em,9em);
	
	\end{tikzpicture}
	\caption{The Structure ideals(blue) and filters(red) for a capitalized letter model.}
	\label{fig:poset}
\end{figure*}
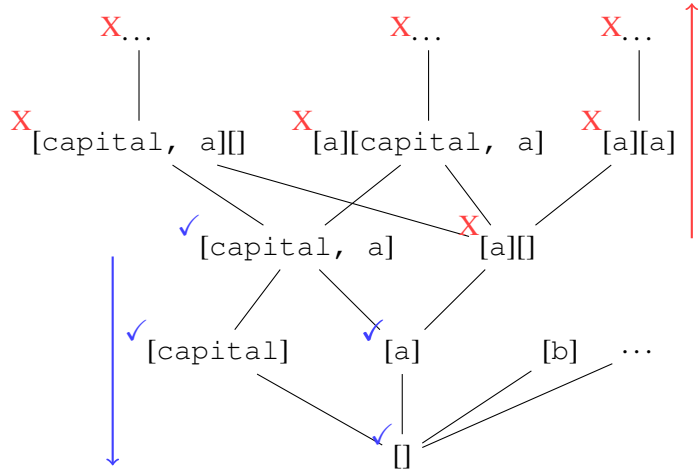

Given a grammar $G$, we call a subfactor $s$ in $\Subfact(\Sigma^*)$ \emph{ungrammatical} if it belongs to a principal filter of any element of $G$. Subfactors that are not ungrammatical are called \emph{grammatical}.
Lemma 14 ensures that grammaticality is downward entailing, in the sense that if a model of the word $M(w)$ is not contained in the principal filters of the elements of the grammar, then neither are the subfactors of $M(w)$. But it also ensures that ungrammaticality is upward entailing: if a model of the word $M(w)$ belongs to the principal filters of the elements of the grammar, then all of the superfactors of $M(w)$ in that filter are likewise contained.

In this way, the ideals and filters within a a particular model noted above give rise to these entailment properties of grammaticality with respect to the hypothesis space. If the learner constructs filters, then the grammar $G$ will allow structures such that language membership is downward entailing with respect to the grammar $G$, and language non-membership is upward entailing with respect to the grammar $G$.

\subsection{Example: Text Capitalization}

As an example, consider capitalized letters as discussed above. In an unconventional word model, each capital letter at some position $x$ is represented as satisfying one of the relations $R \in \{\mathtt{a}(x),\mathtt{b}(x),\ldots,\mathtt{z}(x)\}$ as well as the unary relation $\mathtt{capital}(x)$. 
Thus the relation $\mathtt{a}(x)$ is true of both lowercase $\mathtt{a}$ and uppercase $\mathtt{A}$, but $\mathtt{a}(x) \land \mathtt{capital}(x)$ is only true of uppercase $\mathtt{A}$. 
Note also that in this model no position $x$ of a structure can satisfy both predicates $\mathtt{a}(x)$ and $\mathtt{b}(x)$. We return to this point in \S\ref{sec:botuplearner}.


Figure 2 showcases the relationship among these structures under a model $M$. 
The structure for $\mathtt{A}$, $[\mathtt{capital},\mathtt{a}]$, contains as subfactors $[\mathtt{capital}]$, $[\mathtt{a}]$, [], and the empty structure (not shown). 
The empty structure is a subfactor of [], and [] in turn is a subfactor of $[\mathtt{capital}]$ and $[\texttt{a}]$. 
The subfactor $[\mathtt{a}]$ contains the subfactor [], the domain element with no relations, but has superfactors [\texttt{capital,a}], which has one domain element and two relations, and [a][], which has two domain elements, and the first satisfying the property \texttt{a}. 
Subfactors and superfactors are listed above and below each other, respectively, with lines between them. 
Members of one ideal are noted with a blue checkmark, and members of a filter are noted by a red asterisk.    

Applying this to the example in Figure~\ref{fig:poset}, if the structure $\mathtt{[capital,a]}$ is grammatical, then all of its subfactors, such as [\texttt{capital}] and [\texttt{a}], and [] are grammatical.
Since those are grammatical, each of their subfactors is also grammatical, which in this case is just [$\emptyset$], shown in blue in Figure~\ref{fig:poset}. 
Conversely, if the structure [\texttt{a}][] is known to be ungrammatical, then any structure which has it as a subfactor is also ungrammatical (in this example, [\texttt{capital,a}][], shown in Red in Figure~\ref{fig:poset}. To see the importance, consider a string with only lowercase letters. In a connected model, the grammar would ban 26 forbidden factors (A,B,C,...), but the ``capital" model bans just one, [\texttt{capital}].

\subsection{Example: Long Distance Linguistic Dependencies}

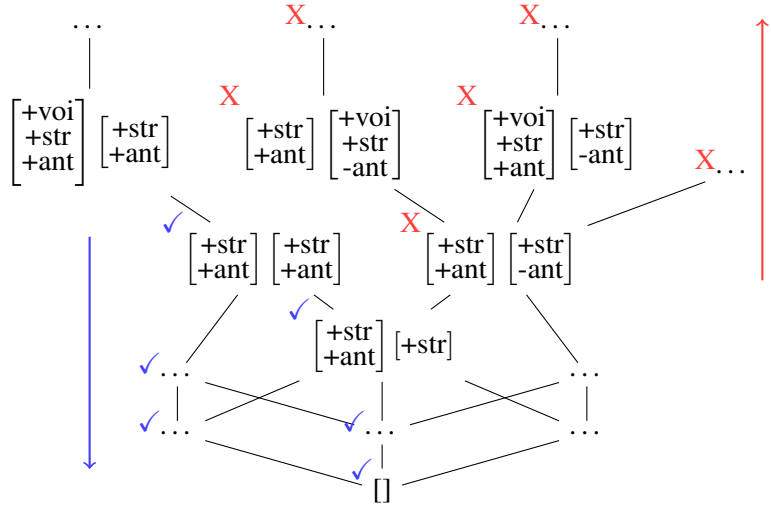
\begin{figure*}[htbp!]
	\centering
	\begin{tikzpicture}
	
	\node (0) at (0,0) {[]}; 
	\node (1) at (-7em,2em) {\ldots};
	\node (2) at (0em,2em) {\ldots};
	\node (3) at (7em,2em) {\ldots};
	
	\node (4) at (-7em,4em) {\ldots};
	\node (5) at (0,5em){$\left[\shortstack{+str\\+ant\\[-.7em]}\right]\left[\shortstack{+str}\right]$};
	\node (6) at (7em,4em) {\ldots};
	
	\node (7) at (-4em,8em) {$\left[\shortstack{+str\\+ant\\[-.7em]}\right]\left[\shortstack{+str\\+ant\\[-.7em]}\right]$};
	
	\node (8) at (4em,8em) {$\left[\shortstack{+str\\+ant\\[-.7em]}\right]\left[\shortstack{+str\\-ant\\[-.7em]}\right]$};

	\node (9) at (-10em,12em) {$\left[\shortstack{+voi\\+str\\+ant\\[-1em]}\right]\left[\shortstack{+str\\+ant\\[-.7em]}\right]$};
	
	\node (10) at (-2em,12em) {$\left[\shortstack{+str\\+ant\\[-1.5ex]}\right]\left[\shortstack{+voi\\+str\\-ant\\[-2.5ex]}\right]$};
	
	\node (11) at (6em,12em) {$\left[\shortstack{+voi\\+str\\+ant\\[-2.5ex]}\right]\left[\shortstack{+str\\-ant\\[-1.5ex]}\right]$};
	
	\node (12) at (-10em,16em) {\ldots};
	\node (13) at (-2em,16em) {\ldots};
	\node (14) at (6em,16em) {\ldots};
	\node (15) at (12em,11em) {\ldots};
	
	\foreach \source/\target in {1/5,1/4,2/5,2/4,2/6,3/5,3/6,6/8,4/7,5/7,5/8,7/9,8/10,8/11,10/13,11/14,8/15,9/12,0/1,0/2,0/3}
	\draw (\source) to (\target);
	
	\node at (1.north west) {\color{good}\checkmark};
	\node at (2.north west) {\color{good}\checkmark};
	\node at (5.north west) {\color{good}\checkmark};
	\node at (4.north west) {\color{good}\checkmark};
	\node at (7.north west) {\color{good}\checkmark};
	\node (check2) at (0.north west) {\color{good}\checkmark};
	\node (bad2) at (8.north west) {\color{bad}X};
	\node (bad2) at (10.north west) {\color{bad}X};
	\node (bad2) at (11.north west) {\color{bad}X};
	\node (bad2) at (13.north west) {\color{bad}X};
	\node (bad2) at (14.north west) {\color{bad}X};
	\node (bad2) at (15.north west) {\color{bad}X};
	
	\draw [good,->,thick] ($(7.south)+(-6em,2em)$) -- +(0em,-8em);
	\draw [bad,->,thick] ($(8.south)+(9em,.5em)$) -- +(0em,9em);
	
	\end{tikzpicture}
	\caption{Structure ideals(blue) and filters(red) for a phonological precedence model.}
	\label{fig:poset2}
\end{figure*}

As another example, sequences of speech sounds as mentioned earlier may be decomposed into binary features based on their phonetic properties like anteriority (\texttt{$\pm$ant} --- whether it occurs in the anterior of the vocal tract), stridency (\texttt{$\pm$str} --- whether it produces a high-intensity fricative noise), or voicing (\texttt{$\pm$voi} --- whether it activates the vocal chords), among others \citep{hayes2009}. Each sound at some position $x$ is represented as satisfying relations $R \in \{\mathtt{\pm voi}(x),\mathtt{\pm str}(x),\ldots,\mathtt{\pm ant}(x)\}$. 
Thus the relation $\mathtt{+str}(x)$ is true of both the sound \texttt{s} as in the first sound of ``sue" and \texttt{\textipa{S}}, as in ``shoe", but $\mathtt{+str}(x) \land \mathtt{-ant}(x)$ is only true of $\texttt{\textipa{S}}$. 

Note also that in this model no position $x$ of a structure can satisfy both predicates $\mathtt{+str}(x)$ and $\mathtt{-str}(x)$. We return to this point in \S \ref{sec:botuplearner} below. 
We again use square brackets to delimit the domain elements and write the unary features within them, so a model representation like \texttt{$\left[\shortstack{+str\\+ant\\[-.7em]}\right]\left[\shortstack{+str\\-ant\\[-.7em]}\right]$} has the following visual representation:

\begin{center}
	\begin{tikzpicture}[shorten >=1pt,->,thick]
	\tikzstyle{vertex}=[circle,fill=black!25,minimum size=17pt,inner sep=0pt]
	\draw node[vertex][label={[align=center]above:$\shortstack{+str\\+ant}$}] (s) at (0,0) {};
	\draw node[vertex][label={[align=center]above:$\shortstack{+str\\-ant}$}] (r) at (2,0) {};
	\draw (s) -- (r); \node at (1,0.25) {$<$};

	\end{tikzpicture}
\end{center}

To ease the exposition, we will use square brackets to delimit the domain elements and write the unary relations within them instead of specifying the model in mathematical detail. In an unconventional subsequence word model, then, one possible structure of the subsequence \texttt{s\ldots\textipa{S}} is written \texttt{$\left[\shortstack{+str\\+ant\\[-.7em]}\right]\left[\shortstack{+str\\-ant\\[-.7em]}\right]$}.        


In many languages, the presence of certain segments is dependent on the presence of another segment. In Samala, subsequences like \texttt{s\ldots s} are allowed but \texttt{s\ldots\textipa{S}} are not, so words like \texttt{ha\textbf{s}xintilawa\textbf{s}} are allowed but words like \texttt{ha\textbf{s}xintilawa\textbf{\textipa{S}}} are not \citep{Hansson2010}. In an unconventional model, banning structures of the form \texttt{[+str][+str]} is insufficient, since all these segments share that stridency property, while a structure like \texttt{$\left[\shortstack{+str\\+ant\\[-.7em]}\right]\left[\shortstack{+str\\-ant\\[-.7em]}\right]$} will distinguish them, since they disallow stridents which disagree on the \texttt{$\pm$ant}$(x)$ relations. The structure \texttt{[+ant][-ant]} however, is insufficient, since consonants like \texttt{p,b,m} have that feature, and would incorrectly ban acceptable strings. To see the importance, a conventional string model must ban multiple sibilant factors {\textipa{sS,zS,sZ,zZ}}, while an unconventional model must just ban one, \texttt{$\left[\shortstack{+str\\+ant\\[-.7em]}\right]\left[\shortstack{+str\\-ant\\[-.7em]}\right]$}

Figure~\ref{fig:poset2} showcases the relationship among these structures under a precedence model $M^<$. 
The structure for \texttt{$\left[\shortstack{+str\\+ant\\[-.7em]}\right]\left[\shortstack{+str}\right]$} contains as subfactors (among others) \texttt{$\left[\shortstack{+str}\right]\left[\shortstack{+str}\right]$}, \texttt{$\left[\shortstack{+str}\right]\left[\right]$}, [], and the empty structure (not shown). 
The empty structure is a subfactor of [], and [] in turn is a subfactor of \texttt{$[\mathtt{+ant}]$} and \texttt{$[\texttt{-str}]$}, and so on. 
If the structure \texttt{$\left[\shortstack{+str\\+ant\\[-.7em]}\right]\left[\shortstack{+str\\+ant\\[-.7em]}\right]$} is grammatical, then all of its subfactors, 
are grammatical, and so are their subfactors, in turn.
Conversely, if the structure \texttt{$\left[\shortstack{+str\\+ant\\[-.7em]}\right]\left[\shortstack{+str\\-ant\\[-.7em]}\right]$} is known to be ungrammatical, then any structure which has it as a subfactor is also ungrammatical (for example, \texttt{$\left[\shortstack{+voi\\+str\\+ant\\[-2.5ex]}\right]\left[\shortstack{+str\\-ant\\[-1.5ex]}\right]$}, where the first segment is also voiced \texttt{+voi}), shown in Red in Figure~\ref{fig:poset2}.

The structure filters give the learner an advantage when confronting hypothesis spaces under a particular model. 
In particular, it allows the learner to prune vast swathes of the hypothesis space as it reaches for principal elements of features. 
If a learner identifies one structure as being grammatical, the learner may infer that all of its subfactors are also grammatical and not have to consider them. 
Alternatively, if the learner knows a structure is ungrammatical, it may infer that the ideals above it are also ungrammatical. 

Generally, these reductions can be exponential: an alphabet of size $2^n$ can be represented with $n$ unary relations in the model signature. 
However, this exponential reduction does not necessarily make learning any easier. 
The reason for this is that the size of $\Subfact_k(M,\Sigma^*)$ equals $\sum_{i=1}^k(2^n)^i$ where $n$ is the number of unary relations. 
Since a grammar is defined as a subset of $\Subfact_k(M,\Sigma^*)$, the number of considered grammars is thus very large.
Therefore, the problem of how to search this space effectively is paramount.

\section{The Learning Problem}
For some $M,k$, is $\mathcal{L}(M,k)$ learnable 
from positive data? 
The short answer is Yes \citep{Heinz-2010-SEL,Heinz-KasprzikEtAl-2012-LLHS}. 
The solution presented in these papers can be thought of as using the function $\Subfact_k(M,w)$ to identify permissible $k$-factors in words $w$ in the positive data. 
The $k$-factors that are not permissible are forbidden. 
With sufficient positive data, such a learning algorithm will converge to a grammar that generates any target language in the class.
While this solution is sound in theory, when the space of $k$-factors is very large, it is not practical.
Here, we make clear the problem the learning algorithm solves. 

We state the learning problem not in terms of converging to a correct grammar in the limit as previously studied, but instead of returning an `adequate' grammar given a finite positive sample. 
Determining what counts as an adequate grammar is what \citep{DeRaedt2008} calls a Quality Criterion.

\begin{definition}[The Learning Problem]\label{def:lp}
	Fix $\Sigma$, model $M$, and positive integer $k$. 
	For any language $L\in\mathcal{L}(M,k)$ and for any finite $D\subseteq L$, return a grammar $G$ such that 
\end{definition}
\begin{enumerate} \itemsep0em
	\item $G$ is consistent, that is, it covers the data: $D\subseteq L(G)$; 
	\item $L(G)$ is a smallest language in $\mathcal{L}$ which covers the data: so for all $L\in\mathcal{L}$ where $D\subseteq L$, we have $L(G)\subseteq L$; and
	\item $G$ includes structures $S$ that are restrictions of structures $S'$ included in other grammars $G'$ that also satisfy (1) and (2): for all $G'$ satisfying the first two criteria   
	for all $S'\in G'$, there exists $S\in G$ such that $S\sqsubseteq S'$.
\end{enumerate}

The first criterion is self-explanatory. 
The second criterion is motivated by Angluin's (1980) \nocite{angluin80b} analysis of identification in the limit.  
The third criterion requires that the grammar contain the most ``general'' subfactors. An example will help illustrate this criterion.

Consider again the grammar $G=\{M^\succ(aa),M^\succ(bb),M^\succ(c)\}$ with $\Sigma=\{a,b,c\}$. 
$L(G)$ is the same as $L(H)$ where 
$H=\{M^\succ(aa),M^\succ(bb),M^\succ(ac),M^\succ(bc), M^\succ(cc), \\M^\succ(ca),M^\succ(cb)\}$. 
In $H$ all the forbidden subfactors are of size 2, whereas $G$ encapsulates all of the 2-factors in $H$ which include $c$  with a single 1-factor $M^\succ(c)$.
Both grammars $G$ and $H$ may satisfy criteria (1) and (2) but $H$ would not satisfy criterion (3) because of $G$.

\section{A Bottom-Up Learning Algorithm} 
\label{sec:botuplearner}


\citep{DeRaedt2008} identifies two directions of inference: specific-to-general (i.e., `top-down') and general-to-specific (i.e., `bottom-up'). Since we are trying to find the most general subfactors, top-down inference has the potential to consider exponentially many more subfactors than bottom-up inference. 
It makes mores sense to traverse bottom-up, that is, from the most general subfactors possible to the most specific. 
Additionally, once a subfactor is identified as an element of the grammar, none of its superfactors (elements of its principal filter) need to be considered further.

A bottom-up learner is shown in Algorithm \ref{alg2}. 
Its input is a positive data sample $D$ and an integer $k$ that identifies the upper bound on the size of the subfactors.

\begin{algorithm} \small
	\KwData{positive sample $D$, empty structure $s_0$, \\ max constraint size $k$}
	\KwResult{$G$, a set of constraints}
	$Q \leftarrow \{s_0\};$ \\
	$G\leftarrow \emptyset;$ \\
	$V\leftarrow \emptyset;$\\
	\While{$Q \neq \emptyset$}{
		$s \leftarrow Q.\texttt{dequeue()}$;\\
		$V \leftarrow V \cup \{s\}$;\\
		\If{$\exists x \in D$ such that $s \sqsubseteq x$}{$S \leftarrow \NSupFacs(s)$;\\ 
			$S' \leftarrow \{s \in S\mid |s|\leq k \wedge (\neg\exists g\in G)[g\sqsubseteq s] \wedge s\not\in V \}$;\\
			$Q$.\texttt{enqueue}($S'$);\\}
		\Else{$G\leftarrow G\cup \{s\}$;}
	}
	\Return G;\\
	\vspace{.5em}
	\caption{Bottom-up learning algorithm for lattice-structured constraints}
	\label{alg2}
\end{algorithm}

\begin{figure}[h] 
	\centering
	\includegraphics[width=5cm]{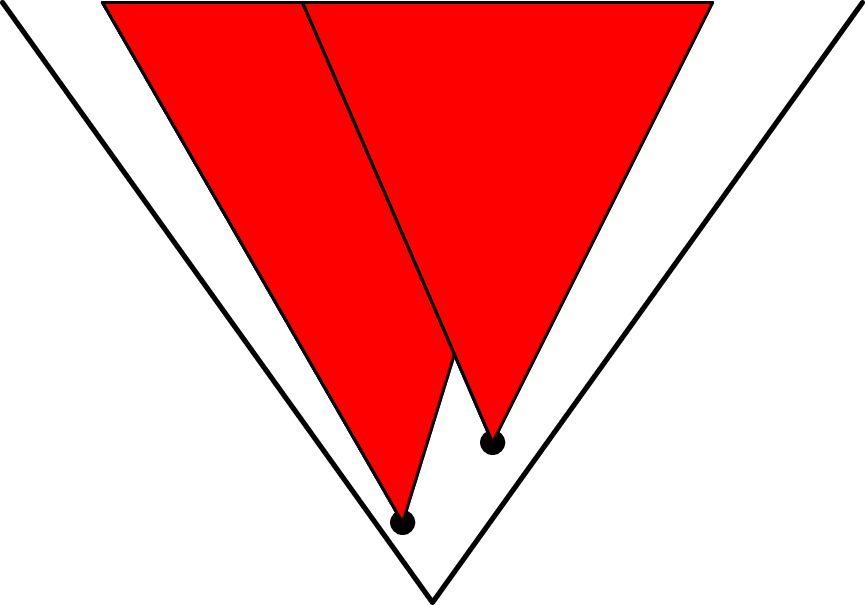}\caption{Pruning the hypothesis space}
	\label{fig:projection}
\end{figure}

The algorithm makes use of a queue $Q$, which is initialized to contain just the empty structure $s_0$. 
It also initializes two empty sets: $G$, the grammar that will ultimately be returned, and $V$, the set of `visited subfactors'. 
The subfactors in $Q$ are considered one at a time, in order, and as each subfactor $s$ is considered it is added to $V$. 
If $s$ is not a subfactor of the model of any word in the positive sample $D$ (i.e., not contained by any data point in $D$), then it is added to the grammar $G$.

If $s$ is a subfactor of the sample, it is sent to the function $\NSupFacs$, which returns a set of \emph{least} superfactors for $s$. 
For concreteness, $\NSupFacs(s)$ may be defined formally as follows: 

$ \NSupFacs(s) = \{ S\in\Subfact_k(\Sigma^*) \mid s\sqsubseteq S, \neg\exists\,S' [s\sqsubseteq S' \sqsubseteq S] \}$.

Practically $\NSupFacs$ will be defined constructively so that each subfactor in $\Subfact_k(\Sigma^*)$ is constructed only once as needed. Thus, not only will it not be needed to store the whole set $\Subfact_k(\Sigma^*)$ in memory, but the set $V$ may be excluded from the algorithm as well. 

This set of superfactors is then filtered by the following criteria: they must be smaller than $k+1$, they must contain no element of $G$ as a subfactor, and they must not have been previously considered (i.e., they cannot be in $V$). Those structures that survive this filter are added to $Q$. This procedure continues until there are no more structures left to consider in $Q$.



\begin{theorem}
	For any $L\in\mathcal{L}(M,k)$, and any finite set $P\subseteq L$ provided as input to Algorithm \ref{alg2}, it returns a grammar $G$ satisfying Definition \ref{def:lp}.
\end{theorem}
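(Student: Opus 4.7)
The plan is to reduce all three criteria in Definition~\ref{def:lp} to a single structural lemma about what ends up in the output $G$, proved by induction on the size of a structure. First I would dispatch the easy preliminaries: the algorithm terminates because the filter ensures only structures $s$ with $|s|\le k$ are ever enqueued, and up to isomorphism there are only finitely many such structures over the fixed signature; and criterion (1) is immediate from the algorithm itself, since an element $s$ is placed into $G$ only along the \textbf{else} branch, i.e.\ only when no $x\in D$ has $s\sqsubseteq M(x)$, so by Lemma~\ref{zeroInt} every $x\in D$ lies in $L(G)$.

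The main work is a lemma, proved by strong induction on $|s|$, that asserts two things simultaneously for every structure $s$ with $|s|\le k$:
\begin{itemize}\itemsep0em
\item[(a)] if $s\sqsubseteq M(x)$ for some $x\in D$ and no proper subfactor of $s$ lies in the final $G$, then $s$ is dequeued during execution;
\item[(b)] if $s\not\sqsubseteq M(x)$ for any $x\in D$, then there exists $g\in G$ with $g\sqsubseteq s$.
\end{itemize}
The base case $|s|=0$ (the empty structure $s_0$) is immediate since $s_0$ starts in $Q$. For the inductive step, the key observation is that BFS processes structures in non-decreasing order of size: all super-factors of a processed $s$ go to the back of $Q$, while remaining equal-size structures sit at the front. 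In (a), choose any immediate subfactor $r$ of $s$; it is grammatical (since $r\sqsubsetneq s\sqsubseteq M(x)$) and inherits the ``no proper subfactor in $G$'' hypothesis, so by induction $r$ is dequeued; when $r$ is processed, $s\in\NSupFacs(r)$ passes the filter (size $\le k$, no $g\in G$ with $g\sqsubseteq s$ by hypothesis, and $s\notin V$ by the size ordering), so $s$ is enqueued and later dequeued. For (b), if some proper subfactor of $s$ is in $G$ we are done; otherwise the induction applied to each immediate subfactor $r$ forces $r$ to be grammatical (else we obtain a proper subfactor of $s$ in $G$, contradiction), whence (a) gives $r\in V$; running the same filter argument shows $s$ itself is enqueued and then, being ungrammatical, added to $G$, so take $g=s$.

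From this lemma the remaining criteria drop out. For (2), let $L'=L(G')\in\mathcal{L}(M,k)$ with $D\subseteq L'$, and suppose $w\notin L'$. Then some $g'\in G'$ has $g'\sqsubseteq M(w)$; because $G'$ is consistent with $D$, $g'$ is not a subfactor of any $x\in D$, and $|g'|\le k$, so (b) yields $g\in G$ with $g\sqsubseteq g'\sqsubseteq M(w)$, giving $w\notin L(G)$; contrapositively $L(G)\subseteq L'$. For (3), let $G'$ satisfy (1) and (2) and pick any $S'\in G'$; consistency of $G'$ with $D$ says $S'\not\sqsubseteq M(x)$ for any $x\in D$, and $|S'|\le k$, so (b) applied to $S'$ produces $S\in G$ with $S\sqsubseteq S'$.

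The main obstacle is the temporal bookkeeping in part (a) of the lemma: I need the hypothesis ``no proper subfactor of $s$ is in the \emph{final} $G$'' to propagate correctly through the inductive step, and I need to know that at the moment $r$ is processed no proper subfactor of $s$ has yet been placed in $G$ (so the filter does not spuriously kill $s$) and that $s$ has not already been visited. Both follow from the monotonicity of $G$ (it only grows) together with the size-ordering of BFS, but stating and carefully verifying these temporal invariants is the part of the argument most prone to slips.
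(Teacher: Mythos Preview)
Your plan is sound and differs substantially from the paper's own argument. The paper handles (1) as you do, but for (2) it simply asserts that $w\in L(G)$ implies $\Subfact_k(w)\subseteq\Subfact_k(D)$ (precisely the fact your lemma (b) would justify) and concludes directly; for (3) it runs a short contradiction: assuming some $s\in G$ has a proper ungrammatical subfactor $s'$, it argues from the FIFO order that $s'$ would have been dequeued and added to $G$ before $s$ could be generated by $\NSupFacs$, so the filter would block $s$ from ever entering $Q$. The paper never isolates an inductive lemma; what you are doing is making explicit, and actually proving, a coverage statement that the paper uses tacitly in (2) and does not address in its treatment of (3). Your decomposition is more work but gives a more complete and transparent argument.

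Two technical fixes for your write-up. First, strong induction on the domain cardinality $|s|$ is too coarse: an immediate subfactor $r$ of $s$ can have $|r|=|s|$ (e.g.\ dropping a single unary-relation tuple does not shrink the domain), so your inductive hypothesis need not apply to it. Use well-founded induction on $\sqsubseteq$ itself, or on domain size plus the total count of relation tuples. Second, the claim ``BFS processes structures in non-decreasing order of size'' is not true as stated (same-size superfactors can be interleaved with larger ones in $Q$), but you do not actually need it: in step (a), if $s\in V$ at the moment $r$ is processed then $s$ has already been dequeued and you are done; otherwise the filter passes $s$ and you proceed as written. Replace the size-ordering observation with this simple case split and the bookkeeping worry you flag dissolves.
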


\begin{proof}
	
	Consider any $x\in D$. Algorithm 1 only adds elements to $G$ that are not subfactors of $x$, so $x\not\in\Supfact(G)$. Thus $x\in L(G)$ and $D\subseteq L(G)$, satisfying Condition (1).
	
	Consider any $L'\in\mathcal{L}$ with $D\subseteq L'$. 
	To show $L=L(G)\subseteq L'$, 
	consider any $w\in L$. 
	Then $\Subfact(w)\subseteq\Subfact(D)$ and
	$\Subfact(D)\subseteq \Subfact(L')$ since $D\subseteq L$. 
	Then $\Subfact(w)\subseteq \Subfact(L')$. 
	Hence, $w\in L'$, and so $L\subseteq L'$, satisfying Condition (2).
	
	For condition (3), we use the fact that elements in the grammar $G$ were in Q at some point. 
	Suppose $s, s'$ are subfactors such that $s \in G$, $s' \sqsubseteq s$, and ($\neg \exists x \in D)[s' \sqsubseteq M(x)]$. Since $s \in G$, then at some point $s \in Q$. 
	
	If $s' \sqsubseteq s$ then $s'$ will be added to $Q$ before $s$ is generated by $\NSupFacs$. 
	Because $Q$ is a queue, $s'$ will also be removed from $Q$ before $s$ is generated by $\NSupFacs$.
	Since $s'$ is not contained by any $M(x)$ with $x \in D$, it will be added to $G$. 
	When $s$ is generated by $\NSupFacs$, it will not pass the filter because it fails the second criterion since $s' \sqsubseteq s$ and $s' \in G$. Then $s$ is never added to $Q$, and therefore $s \notin G$, contra our original assumption. 
	Thus Condition (3) is satisfied.
\end{proof}

One aspect of the algorithm to highlight is that when a subfactor $g$ is added to $G$, it is not added to $Q$. 
Consequently, $\NSupFacs(g)$ is never added to $Q$. In this way, finding elements of $G$ prunes the remainder of the space to be searched (see figure~\ref{fig:projection}).
In general, it is not the case that every element in the principal filter of $g$ will not be generated by $\NSupFacs$ since some of these elements may belong to $\NSupFacs(x)$ for other subfactors $x$ on the $Q$. 
We expect subfactors on the `border' of $\Supfact(g)$ to be generated in this way (and then they are filtered out).
This pruning, especially when the subfactors are quite general, can significantly reduce the remaining space to be traversed.

In regard to efficiency, in the worst case, the elements of $G$ are all very specific subfactors and are greatest elements of $\Subfact_k(\Sigma^*)$. 
In this case, every subfactor  $\Subfact_k(\Sigma^*)$ will be added to $Q$ and the time complexity is thus exponential. 
However, we are primarily interested in the case when  $\Subfact_k(D)$ are a small proportion of $\Subfact_k(\Sigma^*)$. 
This constitutes an example of data sparsity.
In this case, we believe the elements of the target grammar will be much `lower' in the partial order and thus will be found much more quickly. 
Determining what conditions on $\Subfact_k(D)$ and $\Subfact_k(\Sigma^*)$ result in a polynomial time run in the size of $D$ is a focus of current research activity.

Another area of active research is developing a recipe for the $\NSupFacs$ function for models with a successor or precedence order relation and arbitary unary relations.
The basic idea underlying the bottom-up algorithm is to develop a spanning tree for the poset $\Subfact(\Sigma^*)$ and to traverse this tree in a breadth-first manner.
The function $\NSupFacs$ helps control this search. 
Ideally, $\NSupFacs$ would only generate each subfactor once, which obviates the need to store visited subfactors in $V$.
This can be accomplished to some extent in different ways. 
For incompatible unary relations, like $\mathtt{a}$ and $\mathtt{b}$ in our capitalization example, $\NSupFacs$ can be defined to prevent adding property $\mathtt{a}$ to a position that already satisfies property $\mathtt{b}$. 

For compatible unary relations, like $\mathtt{a}$ and $\mathtt{capital}$ in our capitalization example, an ordering over the unary relations such as $\mathtt{a} < \mathtt{b} < \mathtt{capital}$ can help eliminate generating the same subfactor in different ways. For example, if $\NSupFacs$ is defined to only add `lesser' unary relations to positions that already have them then it would only output [$\mathtt{capital,a}$] given the subfactor [$\mathtt{a}$] as input. On the other hand, 
when given as input the subfactor [$\mathtt{capital}$], it could not add any unary relation to this position.

\section{Conclusion} 
In this paper, we considered the problem of learning formal languages defined as the complement of the union of finitely many principal filters, whose principal elements make up the grammar.
This is one way to characterize the Strictly $k$-Local and Strictly $k$-Piecewise languages, but the generalization here lets us consider enriched representations of strings where different elements in a string can be said to share properties. it also lets us learn the shortest forbidden substrings in $SL_k$ \citep{RonEtAl1996}
This is useful in many applications where domain-specific knowledge is available and should be taken advantage of. Such enriched representations, however, have a drawback. 
The number of subfactors is large which makes identifying the principal elements of the filters difficult. 
This paper showed that the partial ordering of the subfactors motivates a bottom-up learning algorithm which finds the least subfactors whose filters do not include the positive data.


\section*{Acknowledgments}

We would like to thank James Rogers for very helpful discussion on the notion of subfactor. This work was supported by NIH grant \#R01HD87133-01 to JH.\\


\bibliography{acl2019}
\bibliographystyle{acl_natbib}

\appendix

\end{document}